\def\isdraft{0}
\newtheorem{theorem}{Theorem}
\newtheorem{proposition}[theorem]{Proposition}
\newtheorem{lemma}[theorem]{Lemma}
\theoremstyle{definition} 
\newtheorem{example}[theorem]{Example}
\newtheorem{problem}[theorem]{Problem}
\title{Logic-based similarity}
\author{
	Christian Anti\'c\\
}
\begin{document}
\begin{abstract} 
	This paper develops a {\em qualitative} and logic-based notion of similarity from the ground up using only elementary concepts of first-order logic centered around the fundamental model-theoretic notion of type.
\end{abstract}

\maketitle

\section{Introduction}

Detecting and exploiting similarities between seemingly distant objects is at the core of analogical reasoning which itself is at the core of artificial intelligence with applications to such diverse tasks as proving mathematical theorems and building mathematical theories, commonsense reasoning, learning, language acquisition, and story telling \cite<e.g.>{Boden98,Gust08,Hofstadter01,Hofstadter13,Krieger03,Polya54,Winston80,Wos93}.

This paper develops a {\em qualitative} and logic-based notion of similarity {\em from the ground up} using only elementary tools of first-order logic centered around the fundamental model-theoretic notion of type. The idea is to say that two elements, possibly from different domains, are similar iff they share a maximal set of abstract properties. 

We show that the so-obtained notion of similarity has appealing mathematical properties (\prettyref{sec:Properties}). Specifically, we show in our First and Second Isomorphism Theorems \ref{thm:FIT}, \ref{thm:SIT} that similarity is compatible with isomorphisms, which is in line with \citeS{Gentner83} famous structure-mapping theory of analogical reasoning.

In a broader sense, this paper is a further step towards a mathematical theory of analogical reasoning \cite<cf.>{Antic22,Antic22-3,Antic23-4,Antic22-4,Antic_i7,Antic22-1}.

\section{Similarity}

This is the main section of the paper. Here we shall introduce from first principles a notion of similarity based on the well-known notion of model-theoretic type. We expect the reader to be fluent in first-order logic as it is presented for example in \citeA{Hinman05}. 

In the remainder of the paper, we fix a first-order language $L$, and two $L$-structures $\mathbb A$ and $\mathbb B$.

Recall that the type of an element collects all the properties of that element in a structure. Comparing two elements from different structures can thus be achieved by comparing their respective types. Formally, the {\em type} of $a$ in $\mathbb A$ is usually defined by
\begin{align*} 
	Type_\mathbb A(a):=\{\varphi(y)\mid\mathbb A\models\varphi(a)\}.
\end{align*} Intuitively, the type of $a$ contains all abstract properties of $a$ in $\mathbb A$---in a sense, it is the theory of $\mathbb A$ from $a$'s perspective. Thus we can interpret the set $T(a,b):=Type_\mathbb A(a)\cap Type_\mathbb A(b)$ as the set of shared abstract properties of $a$ and $b$. We thus expect the similarity of $a$ and $b$ to increase with the number of formulas in $T(a,b)$ and it appears reasonable to say that $a$ and $b$ are similar iff $T(a,b)$ is maximal. However, there are three problems. 

First, if we require $T(a,b)$ to be maximal with respect to $a$ and $b$ {\em at the same time}, we clearly have $T(a,b)\subseteq T(a,a)$ and $T(a,b)\subseteq T(b,b)$ and in most cases this inclusion is proper which means that $a$ and $b$ would be similar iff $a=b$ which is too strong to be useful---we therefore separate the maximality conditions and say that $T(a,b)$ is maximal among the $T(a,b')$, $b'\in\mathbb A$, {\em and} maximal among the $T(a',b)$, $a'\in\mathbb A$, respectively. 

Second, since we always have $T(a,b)\subseteq T(a,a)$, we run into the same problems as before---we therefore require $T(a,b)$ to be maximal with respect to $b$ {\em except for $a$}, and maximally with respect to $a$ {\em except for $b$}.

Third, if we allow arbitrary formulas to appear in the types of $a$ and $b$, then we have $(y=a)\lor (y=b)\in T(a,b')$ iff $b'=b$ or $b'=a$, and we have $(y=a)\lor (y=b)\in T(a',b)$ iff $a'=a$ or $a'=b$, which means that with the previously adapted maximality condition, again $a$ and $b$ are similar iff $a=b$, which is undesirable. We therefore restrict types to contain only formulas without disjunctions (which forces us to restrict negation as well).

This motivates the following definitions. We call an $L$-formula {\em conjunctive} iff it contains no disjunctions and negation occurs only in front of atoms to form literals. We denote the set of all conjunctive $L$-formulas by $c\text-Fm_L$. 

The following definition is an adaptation of the above model-theoretic notion of type to conjunctive formulas. Define the {\em conjunctive $L$-type} (or {\em $c$-type}) of an element $a\in\mathbb A$ by
\begin{align*} 
	c\text-Type_\mathbb A(a):=\{\varphi(y)\in c\text-Fm_L \mid \mathbb A\models\varphi(a)\}.
\end{align*}

We are now ready to formally introduce the main notion of the paper. Given $a\in\mathbb A$ and $b\in\mathbb B$, we define the {\em similarity} relation by\footnote{In the rest of the paper, we will abbreviate ``is $\subseteq$-maximal with respect to $b$ except for $a$ in case $a\in\mathbb B$'' by ``is $b$-maximal''.}
\begin{align*} 
	(\mathbb{A,B})\models a\lesssim b \quad\text{iff}\quad \text{$c\text-Type_\mathbb A(a)\cap c\text-Type_\mathbb B(b)$ is {\em $\subseteq$-maximal} with respect to $b$}\\ 
		\text{except for $a$ in case $a\in\mathbb B$}.
\end{align*} We define
\begin{align*} 
	(\mathbb{A,B})\models a \approx b \quad\text{iff}\quad (\mathbb{A,B})\models a\lesssim b \quad\text{and}\quad (\mathbb{B,A})\models b\lesssim a,
\end{align*} in which case we say that $a$ and $b$ are {\em similar} in $(\mathbb{A,B})$. 

We extend similarity from elements to algebras by
\begin{align*} 
	\mathbb A\lesssim\mathbb B \quad\text{iff}\quad \text{for each $a\in\mathbb A$ there exists some $b\in\mathbb B$ such that $(\mathbb{A,B})\models a\approx b$,}
\end{align*} and
\begin{align*} 
	\mathbb A\approx\mathbb B \quad\text{iff}\quad \mathbb A\lesssim\mathbb B \quad\text{and}\quad \mathbb B\lesssim\mathbb A.
\end{align*} 

For convenience, we define
\begin{align*} 
	c\text-Type_{(\mathbb{A,B})}(a\lesssim b):=c\text-Type_\mathbb A(a)\cap c\text-Type_\mathbb B(b),
\end{align*} and we call the formulas in that set {\em justifications} of $a\lesssim b$ in $(\mathbb{A,B})$. Notice the symmetry
\begin{align*} 
	c\text-Type_{(\mathbb{A,B})}(a\lesssim b)=c\text-Type_{(\mathbb{B,A})}(b\lesssim a).
\end{align*} 

Of course,
\begin{align}\label{equ:subseteq} 
	c\text-Type_\mathbb A(a)\subseteq c\text-Type_\mathbb B(b) \quad\text{implies}\quad (\mathbb{A,B})\models a\lesssim b.
\end{align} and 
\begin{align*} 
	c\text-Type_\mathbb A(a)=c\text-Type_\mathbb B(b) \quad\text{implies}\quad (\mathbb{A,B})\models a\approx b.
\end{align*}

\subsection*{Characteristic justifications}

Computing all justifications of a similarity is difficult in general which fortunately can be omitted in many cases.

We call a set $J$ of justifications a {\em characteristic set of justifications} of $a \lesssim b$ in $(\mathbb{A,B})$ iff $J$ is a sufficient set of justifications, that is, iff
\begin{enumerate}
\item $J\subseteq c\text-Type_{(\mathbb{A,B})}(a\lesssim b)$, and
\item $J\subseteq c\text-Type_{(\mathbb{A,B})}(a\lesssim b')$ implies $b'=b$, for each $b'\neq a\in\mathbb B$.
\end{enumerate} In case $J=\{\varphi\}$ is a singleton set satisfying both conditions, we call $\varphi$ a {\em characteristic justification} of $a \lesssim b$ in $(\mathbb{A,B})$.

\begin{example} Take for example the natural numbers $\mathbb N$ starting at $0$ and the set $2^B$ of all subsets of some set $B$ with the usual operations. Then it is reasonable to expect $0$ to be similar to the empty set $\emptyset$. Given the joint language $L$ containing a function symbol $\ast$ standing for addition and union, respectively, we see that the (conjunctive) formula
\begin{align*} 
	\varphi(y):\equiv (\forall z)(z\ast y=z)
\end{align*} is a characteristic justification of the similarity
\begin{align}\label{equ:0_approx_emptyset} 
	(\mathbb N,2^B)\models 0 \approx \emptyset
\end{align} since
\begin{align*} 
	\varphi\in c\text-Type_{(\mathbb N,2^B)}(0,B') \quad\text{iff}\quad B'=\emptyset,
\end{align*} and
\begin{align*} 
	\varphi\in c\text-Type_{(\mathbb N,2^B)}(a,\emptyset) \quad\text{iff}\quad a'=0.
\end{align*}

Similarly, the (conjunctive) formula (here $\oast$ stands for multiplication and intersection)
\begin{align*} 
	\psi(y):\equiv (\forall z)(z\oast y=y)
\end{align*} is another characteristic justification of the similarity in \prettyref{equ:0_approx_emptyset}.
\end{example}

\section{Properties}\label{sec:Properties}

In this section, we prove some elementary properties of similarity.

\begin{theorem}\label{thm:properties} In a single structure, the similarity relation is reflexive, symmetric, and in general not transitive. In a pair of structures, reflexivity may fail as well.
\end{theorem}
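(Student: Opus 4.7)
The plan is to handle each of the four assertions in the theorem in turn, using only the unfolded definition of $\lesssim$ plus small counterexamples for the two negative clauses.

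For reflexivity inside a single structure, I would observe that $c\text-Type_\mathbb{A}(a)\cap c\text-Type_\mathbb{A}(a)=c\text-Type_\mathbb{A}(a)$ trivially dominates every other intersection $c\text-Type_\mathbb{A}(a)\cap c\text-Type_\mathbb{A}(b')$ by monotonicity; so maximality with respect to $b$ with $a$ excluded (which still leaves every other candidate in play) holds for free. Hence $(\mathbb{A},\mathbb{A})\models a\lesssim a$, and $(\mathbb{A},\mathbb{A})\models a\approx a$ follows because both conjuncts in the definition of $\approx$ are the same statement. Symmetry in a single structure is immediate because the defining condition of $\approx$ is a symmetric conjunction of two $\lesssim$-assertions.

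For non-transitivity, the plan is to build a minimal counterexample in the language $L=\{P,Q\}$ of two unary predicates. Take $\mathbb{A}$ with domain $\{a,b,c\}$ and interpretations $P^\mathbb{A}=\{a,b\}$, $Q^\mathbb{A}=\{b,c\}$. A short tabulation of the literals holding at each element shows that $P(y)$ is a characteristic justification of $a\approx b$ and $Q(y)$ a characteristic justification of $b\approx c$. However $c\text-Type_\mathbb{A}(a)\cap c\text-Type_\mathbb{A}(c)$ contains neither $P(y)$ nor $Q(y)$ nor their negations, while $c\text-Type_\mathbb{A}(a)\cap c\text-Type_\mathbb{A}(b)$ contains $P(y)$; so the former is properly contained in the latter, and already $a\lesssim c$ fails.

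For the final clause, I would exhibit a pair $(\mathbb{A},\mathbb{B})$ in which a common element $a$ disagrees on an atomic fact between the two structures. In $L=\{P\}$ take $\mathbb{A}$ on $\{a,b\}$ with $P^\mathbb{A}=\{a\}$ and $\mathbb{B}$ on $\{a,c\}$ with $P^\mathbb{B}=\{c\}$. Then $c\text-Type_\mathbb{A}(a)\cap c\text-Type_\mathbb{B}(a)$ contains neither $P(y)$ nor $\neg P(y)$, whereas $c\text-Type_\mathbb{A}(a)\cap c\text-Type_\mathbb{B}(c)$ contains $P(y)$. Since the ``except for $a$'' clause only removes $b'=a$ from the competition, the witness $c$ remains and spoils maximality, so $(\mathbb{A},\mathbb{B})\not\models a\lesssim a$. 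The only conceptually delicate point throughout is keeping track of the ``except for $a$'' exclusion when $a$ happens to sit in both domains; once that is respected, every claim reduces to inspection of a table of atomic literals.
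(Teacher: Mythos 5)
Your treatment of reflexivity and symmetry in a single structure is correct and matches the paper's: both are immediate from the definitions, since $c\text-Type_{(\mathbb{A,A})}(a\lesssim a)=c\text-Type_\mathbb A(a)$ contains every competitor $c\text-Type_{(\mathbb{A,A})}(a\lesssim b')$. The two counterexamples, however, rest on ``inspection of a table of atomic literals,'' and this is where the argument breaks: $\subseteq$-maximality is a statement about \emph{all} conjunctive formulas, including quantified ones and ones using equality, not just the literals holding at each element. Concretely, in your non-transitivity example ($P^\mathbb A=\{a,b\}$, $Q^\mathbb A=\{b,c\}$) the conjunctive formula
\begin{align*}
\varphi(y):\equiv(\exists z)\bigl(z\neq y\wedge P(z)\wedge Q(z)\bigr)
\end{align*}
is satisfied by $a$ and by $c$ (witness $z=b$) but not by $b$. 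Hence $\varphi$ lies in $c\text-Type_\mathbb A(a)\cap c\text-Type_\mathbb A(c)$ but not in $c\text-Type_\mathbb A(a)\cap c\text-Type_\mathbb A(b)$, so the proper inclusion $c\text-Type_{(\mathbb{A,A})}(a\lesssim c)\subsetneq c\text-Type_{(\mathbb{A,A})}(a\lesssim b)$ that you need simply fails: the two sets are incomparable, $c\text-Type_{(\mathbb{A,A})}(a\lesssim c)$ is therefore maximal, and in fact $a\approx c$ holds in your structure, so transitivity is not refuted. (The example would survive only if $L$ had no equality, an assumption the paper does not make---it uses $=$ and $\neq$ inside formulas throughout.)

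The reflexivity-failure example is salvageable but under-argued for the same reason: exhibiting $P(y)\in c\text-Type_{(\mathbb{A,B})}(a\lesssim c)\setminus c\text-Type_{(\mathbb{A,B})}(a\lesssim a)$ does not spoil maximality unless you also prove the inclusion $c\text-Type_{(\mathbb{A,B})}(a\lesssim a)\subseteq c\text-Type_{(\mathbb{A,B})}(a\lesssim c)$; if the two sets were merely incomparable, $c\text-Type_{(\mathbb{A,B})}(a\lesssim a)$ would still be maximal. Here the inclusion does hold, but only because your $\mathbb B$ is isomorphic to $\mathbb A$ via $a\mapsto c$, $b\mapsto a$, so that $c\text-Type_\mathbb B(c)=c\text-Type_\mathbb A(a)$ and the dominating intersection collapses to the full type $c\text-Type_\mathbb A(a)$, which automatically contains anything intersected with $c\text-Type_\mathbb A(a)$. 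This is precisely the device the paper uses in both of its counterexamples: it always arranges for the dominating competitor to realize the \emph{same} $c$-type as the element in question (via an isomorphic copy), so the required inclusion is automatic and only properness needs a witness formula. If you rebuild your transitivity example along those lines---for instance the paper's three two-element structures with $a\to a'$, $b\leftrightarrow b'$, $c'\to c$, where $\mathbb A\cong\mathbb C$ sends $a$ to $c'$---the argument goes through.
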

\begin{proof} Symmetry holds trivially. 

In a single structure, reflexivity holds trivially as well. To disprove reflexivity in a pair of structures $(\mathbb{A,B})$, it suffices to find elements $a\in\mathbb{A\cap B}$ and $b\in\mathbb B$ so that
\begin{align*} 
	c\text-Type_{(\mathbb{A,B})}(a\lesssim a)\subsetneq c\text-Type_{(\mathbb{A,B})}(a\lesssim b),
\end{align*} and it is not hard to construct such counterexamples. For example, in the two structures $\mathbb A$ and $\mathbb B$ given, respectively, by
\begin{center}
\begin{tikzpicture} 
	\node (0) {$0$};
	\node (1) [right=of 0] {$1$};
	\node (2) [right=of 1] {$2$};
	\node (3) [right=of 2] {$\ldots$};
	\draw[->] (0) to [edge label={$f$}] (1);
	\draw[->] (1) to [edge label={$f$}] (2);
	\draw[->] (2) to [edge label={$f$}] (3);
\end{tikzpicture}
\end{center} and
\begin{center}
\begin{tikzpicture} 
	\node (0) {$1$};
	\node (1) [right=of 0] {$0$};
	\node (2) [right=of 1] {$2$};
	\node (3) [right=of 2] {$\ldots$};
	\draw[->] (0) to [edge label={$f$}] (1);
	\draw[->] (1) to [edge label={$f$}] (2);
	\draw[->] (2) to [edge label={$f$}] (3);
\end{tikzpicture}
\end{center} we have
\begin{align*} 
	c\text-Type_{(\mathbb{A,B})}(1\lesssim 1)\subsetneq c\text-Type_{(\mathbb{A,B})}(1\lesssim 0)
\end{align*} which shows
\begin{align*} 
	(\mathbb{A,B})\models 1\not\lesssim 1.
\end{align*}

To disprove transitivity, consider the structures $\mathbb{A,B,C}$ given respectively by
\begin{center}
\begin{tikzpicture}[node distance=1cm and 2cm] 
	\node (a) {$a$};
	\node (a') [above=of a] {$a'$};
	\node (b) [right=of a] {$b$};
	\node (b') [above=of b] {$b'$};
	\node (c) [right=of b] {$c$};
	\node (c') [above=of c] {$c'$};
	\draw[->] (a) to [edge label={$R$}] (a');
	\draw[<->] (b) to [edge label={$R$}] (b');
	\draw[->] (c') to [edge label={$R$}] (c);
\end{tikzpicture}
\end{center} It is easy to verify
\begin{align*} 
	(\mathbb{A,B})\models a\lesssim b \quad\text{and}\quad (\mathbb{B,C})\models b\lesssim c \quad\text{whereas}\quad (\mathbb{A,C})\models a\not\lesssim c
\end{align*} since
\begin{align*} 
	c\text-Type_{(\mathbb{A,C})}(a\lesssim c)\subsetneq c\text-Type_{(\mathbb{A,C})}(a\lesssim c').
\end{align*}
\end{proof}

\begin{problem}\label{problem:transitivity} Characterize those structures in which the similarity relation {\em is} reflexive and transitive.
\end{problem}

\section{Isomorphism Theorems}

It is reasonable to expect isomorphisms---which are {\em bijective} structure-preserving mappings between structures---to be compatible with similarity. Consider the following simple example. Let $\Sigma:=\{a\}$ be the alphabet consisting of the single letter $a$, and let $\Sigma^\ast$ denote the set of all words over $\Sigma$ including the empty word $\varepsilon$. We can identify every sequence $a^n=a\ldots a$ ($n$ consecutive $a$'s) with the non-negative integer $n$, for every $n\geq 0$. Therefore, define the isomorphism $F:(\mathbb N,+)\to (\Sigma^\ast,\cdot)$ via
\begin{align*} 
    F(0):=\varepsilon \quad\text{and}\quad F(n):=a^n,\quad n\geq 1.
\end{align*} We expect the following similarities to hold:
\begin{align*} 
    ((\mathbb N,+),(\Sigma^\ast,\cdot))\models n \approx F(n),\quad\text{for all }n\geq 0.
\end{align*} That this is indeed the case is the content of the First Isomorphism \prettyref{thm:FIT} below.

\begin{lemma}[Isomorphism Lemma]\label{lem:IL} For any isomorphism $F:\mathbb{A\to B}$,
\begin{align*} 
	c\text-Type_\mathbb A(a)=c\text-Type_\mathbb B(F(a)),\quad\text{for all $a\in\mathbb A$}.
\end{align*}
\end{lemma}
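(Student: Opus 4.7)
The plan is to prove the stronger pointwise statement that for every conjunctive formula $\varphi(y)\in c\text-Fm_L$ and every $a\in\mathbb A$,
\[
    \mathbb A\models\varphi(a) \quad\text{iff}\quad \mathbb B\models\varphi(F(a)).
\]
Once this biconditional is established, the claimed equality of $c$-types follows immediately by unfolding the definition of $c\text-Type_\mathbb A(a)$ and $c\text-Type_\mathbb B(F(a))$. Conceptually, the lemma is a restriction to the conjunctive fragment of the classical model-theoretic fact that isomorphisms preserve satisfaction of all first-order formulas, so no genuinely new ingredient is needed beyond carefully running the standard preservation argument on the restricted syntax.

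I would proceed by induction on the structure of $\varphi$. By definition of conjunctive formula, the base case consists of literals: atoms of the form $R(t_1,\ldots,t_n)$ or $s=t$, together with their negations (the only syntactic position where negation may appear). For atoms, preservation from $\mathbb A$ to $\mathbb B$ uses that $F$ commutes with the interpretations of all function and relation symbols, which is just the homomorphism property of $F$; preservation in the reverse direction uses the same property for $F^{-1}$, which is a homomorphism because $F$ is an \emph{iso}morphism rather than merely a homomorphism. Equality is handled via injectivity of $F$, and negated atoms transfer as an immediate consequence of the biconditional preservation already obtained for atoms.

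For the inductive step, conjunction is trivial: apply the induction hypothesis to each conjunct. For the quantifier cases, bijectivity of $F$ is what does the real work. If $(\exists z)\psi(y,z)$ holds at $a$ in $\mathbb A$ witnessed by some $a'\in\mathbb A$, the induction hypothesis gives $\mathbb B\models\psi(F(a),F(a'))$, so $F(a')$ witnesses $(\exists z)\psi(F(a),z)$ in $\mathbb B$; conversely, any witness $b'\in\mathbb B$ is of the form $F(a')$ by surjectivity of $F$, and the induction hypothesis pulls the satisfaction back to $\mathbb A$. The universal case is dual, again leaning on surjectivity to let the quantifier range match up.

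The main obstacle is not conceptual but purely a matter of bookkeeping: making sure all atomic sub-cases (function symbol atoms, relation symbol atoms, equality atoms) are handled together with their negations, and that bijectivity is explicitly invoked for the quantifier step. Since all of this is routine, I expect the proof to be short and direct.
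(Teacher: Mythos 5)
Your proposal is correct and matches the paper's approach: the paper simply declares the lemma ``a straightforward structural induction proof'' (citing the standard preservation-under-isomorphism result in Hinman), and your argument is exactly that induction, spelled out for the conjunctive fragment.
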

\begin{proof} A straightforward structural induction proof \cite<cf.>[Lemma 2.3.6]{Hinman05}.
\end{proof}

\begin{theorem}[First Isomorphism Theorem]\label{thm:FIT} For any isomorphism $F:\mathbb{A\to B}$,
\begin{align*} 
	(\mathbb{A,B})\models a\approx F(a),\quad\text{for all $a\in\mathbb A$.}
\end{align*}
\end{theorem}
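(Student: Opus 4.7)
The plan is to reduce the First Isomorphism Theorem directly to the Isomorphism Lemma, observing that the structure of the definition makes everything collapse. By \prettyref{lem:IL}, $c\text-Type_\mathbb A(a)=c\text-Type_\mathbb B(F(a))$; consequently
\begin{align*}
    c\text-Type_{(\mathbb{A,B})}(a\lesssim F(a))=c\text-Type_\mathbb A(a)\cap c\text-Type_\mathbb B(F(a))=c\text-Type_\mathbb A(a).
\end{align*}
From here I would simply invoke the implication noted just before the Characteristic Justifications subsection: equality of $c$-types implies similarity. That sentence, however, is stated without proof, so for completeness I would unfold it in the next step.

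For the unfolding, I would observe that for an \emph{arbitrary} $b'\in\mathbb B$,
\begin{align*}
    c\text-Type_{(\mathbb{A,B})}(a\lesssim b')=c\text-Type_\mathbb A(a)\cap c\text-Type_\mathbb B(b')\subseteq c\text-Type_\mathbb A(a)=c\text-Type_{(\mathbb{A,B})}(a\lesssim F(a)).
\end{align*}
Hence $F(a)$ realizes the \emph{maximum} possible value of the intersection among all elements of $\mathbb B$, so $c\text-Type_{(\mathbb{A,B})}(a\lesssim F(a))$ is trivially $\subseteq$-maximal with respect to $F(a)$. Here the ``except for $a$'' clause in the definition of $\lesssim$ plays no role whatsoever, because maximality holds over \emph{all} $b'\in\mathbb B$. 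This establishes $(\mathbb{A,B})\models a\lesssim F(a)$.

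For the reverse direction $(\mathbb{B,A})\models F(a)\lesssim a$, I would run the same argument with the roles of $\mathbb A$ and $\mathbb B$ swapped, using that $F^{-1}\colon\mathbb B\to\mathbb A$ is also an isomorphism and that $F^{-1}(F(a))=a$; equivalently, one can appeal directly to the symmetry $c\text-Type_{(\mathbb{A,B})}(a\lesssim b)=c\text-Type_{(\mathbb{B,A})}(b\lesssim a)$ noted earlier together with the same maximum-achieving argument applied to $a'\in\mathbb A$. Combining the two directions yields $(\mathbb{A,B})\models a\approx F(a)$.

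No substantive obstacle is anticipated: the theorem is essentially a one-line corollary of the Isomorphism Lemma, the only mild subtlety being that the ``except for $a$'' qualification in the definition of $\lesssim$ must be explicitly noted to be vacuous here because $F(a)$ achieves the absolute supremum of intersections, not merely a maximal one.
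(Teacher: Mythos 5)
Your proof is correct and takes essentially the same route as the paper, which simply cites the Isomorphism Lemma as giving the result directly; you have merely unfolded the details, correctly observing that $c\text{-}Type_\mathbb{A}(a)\cap c\text{-}Type_\mathbb{B}(F(a))=c\text{-}Type_\mathbb{A}(a)$ is an upper bound for all the competing intersections, so that maximality (and the vacuity of the ``except for $a$'' clause) is immediate. No further comment is needed.
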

\begin{proof} A direct consequence of the Isomorphism \prettyref{lem:IL}.
\end{proof}

The following counterexample shows that similarity is, in general, {\em not} compatible with homomorphisms.

\begin{example}\label{exa:H} Let $\mathbb A:=(\{a,b\},f)$ and $\mathbb B:=(\{c\},f)$ be given by
\begin{center}
\begin{tikzpicture}[node distance=2cm and 2cm]
\node (b)  {$b$};
\node (a) [below=of b] {$a$};
\node (c) [right=of a] {$c$};

\draw[->] (a) to [edge label={$f$}] (b); 
\draw[->] (b) to [edge label'={$f$}] [loop] (b);
\draw[->] (c) to [edge label'={$f$}] [loop] (c);
\draw[dashed,->] (a) to [edge label'={$F$}] (c);
\draw[dashed,->] (b) to [edge label={$F$}] (c);
\end{tikzpicture}
\end{center} The mapping $F:\mathbb{A\to B}$ is obviously a homomorphism. However, we clearly have
\begin{align*} 
	c\text-Type_\mathbb A(a)\cap c\text-Type_\mathbb B(c)\subsetneq c\text-Type_\mathbb A(b)\cap c\text-Type_\mathbb B(c)=c\text-Type_\mathbb B(c),
\end{align*} which shows
\begin{align*} 
	(\mathbb{B,A})\models F(a)\not\lesssim a
\end{align*} and thus
\begin{align*} 
	(\mathbb{A,B})\models a\not\approx F(a).
\end{align*}
\end{example}



\begin{theorem}[Second Isomorphism Theorem]\label{thm:SIT} For any isomorphims $F:\mathbb{A\to C}$ and $G:\mathbb{B\to D}$,
\begin{align*} 
	(\mathbb{A,B})\models a\approx b \quad\Leftrightarrow\quad (\mathbb{C,D})\models F(a)\approx G(b).
\end{align*}
\end{theorem}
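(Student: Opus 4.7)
The plan is to reduce the Second Isomorphism Theorem to the Isomorphism \prettyref{lem:IL} applied separately to $F$ and $G$. Since $F:\mathbb{A\to C}$ and $G:\mathbb{B\to D}$ are isomorphisms, the lemma yields
\begin{align*}
	c\text-Type_\mathbb A(a')=c\text-Type_\mathbb C(F(a')) \quad\text{and}\quad c\text-Type_\mathbb B(b')=c\text-Type_\mathbb D(G(b'))
\end{align*}
for all $a'\in\mathbb A$ and all $b'\in\mathbb B$. In particular, intersecting at the distinguished pair $(a,b)$ immediately gives
\begin{align*}
	c\text-Type_{(\mathbb{A,B})}(a\lesssim b)=c\text-Type_{(\mathbb{C,D})}(F(a)\lesssim G(b)),
\end{align*}
so the set of justifications is preserved verbatim, and only the $\subseteq$-maximality clauses remain to be matched on the two sides.

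For the forward implication, I would assume $(\mathbb{A,B})\models a\approx b$ and test $\subseteq$-maximality on the $(\mathbb{C,D})$-side. Let $d'\in\mathbb D$ (with $d'\neq F(a)$ in case $F(a)\in\mathbb D$) satisfy $c\text-Type_{(\mathbb{C,D})}(F(a)\lesssim G(b))\subseteq c\text-Type_{(\mathbb{C,D})}(F(a)\lesssim d')$. Using that $G$ is a bijection, write $d'=G(b')$ for a unique $b'\in\mathbb B$; applying the Isomorphism Lemma once more rewrites the right-hand side as $c\text-Type_{(\mathbb{A,B})}(a\lesssim b')$. The assumed $\lesssim$-maximality of $a\lesssim b$ in $(\mathbb{A,B})$ then forces the containment to be an equality, which transfers back to the $(\mathbb{C,D})$ side and establishes $(\mathbb{C,D})\models F(a)\lesssim G(b)$. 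Running the symmetric argument with the bijectivity of $F$ yields $(\mathbb{D,C})\models G(b)\lesssim F(a)$, and together these give $(\mathbb{C,D})\models F(a)\approx G(b)$.

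The reverse implication follows by applying the very same argument to the inverse isomorphisms $F^{-1}:\mathbb{C\to A}$ and $G^{-1}:\mathbb{D\to B}$, so no new work is needed there. The step I expect to require the most care is the bookkeeping around the parenthetical clause ``except for $a$ in case $a\in\mathbb B$'': one has to verify that under the bijection $G$ the excluded element on one side corresponds exactly to the excluded element on the other, i.e.\ that $G(b')=F(a)$ precisely when $b'$ is the element to be excluded in $(\mathbb{A,B})$. In the generic situation where the universes of $\mathbb A,\mathbb B$ (resp.\ $\mathbb C,\mathbb D$) are disjoint, both exclusion clauses are vacuous and the argument is clean; in the overlapping case one needs only the straightforward observation that the bijections identify the relevant element, after which the transfer of maximality goes through as above.
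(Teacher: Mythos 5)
Your proposal is correct and follows exactly the paper's route: the paper's own proof is the one-line remark that $c\text-Type_\mathbb A(a)=c\text-Type_\mathbb C(F(a))$ and $c\text-Type_\mathbb B(b)=c\text-Type_\mathbb D(G(b))$ hold by the Isomorphism \prettyref{lem:IL}, and that the claim is a ``direct consequence'' of these equalities. Your write-up merely spells out the maximality transfer via the bijectivity of $F$ and $G$ that the paper leaves implicit, and is in fact more attentive to the ``except for $a$'' clause than the paper's own argument.
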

\begin{proof} A direct consequence of
\begin{align*} 
	c\text-Type_\mathbb A(a)=c\text-Type_\mathbb C(F(a)) \quad\text{and}\quad c\text-Type_\mathbb B(b)=c\text-Type_\mathbb D(F(b)).
\end{align*}
\end{proof}

\if\isdraft 1
\section{Relational structures}

For each $n\geq 1$, define
\begin{align*} 
	\chi_n:\equiv (\exists z_1\ldots\exists z_n)\bigwedge_{1\leq i\neq j\leq n}(z_i\neq z_j).
\end{align*} The formula $\chi_n$ says that there are at least $n$ distinct elements.

\begin{proposition} For any $a,b\in\mathbb N$,
\begin{align*} 
	(\mathbb N,\leq)\models a\lesssim b \quad\text{iff}\quad a\leq b
\end{align*} which implies
\begin{align*} 
	(\mathbb N,\leq)\models a\approx b \quad\text{iff}\quad a=b.
\end{align*}
\end{proposition}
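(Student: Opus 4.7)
The plan is to establish a structural lemma classifying which unary sets are definable by conjunctive formulas in $(\mathbb N,\leq)$, and then read off the proposition by an elementary comparison of types. The main separator will be
\begin{align*}
	\chi'_k(y) :\equiv (\exists z_1) \cdots (\exists z_k) \Big( \bigwedge_{1 \leq i \neq j \leq k} z_i \neq z_j \;\land\; \bigwedge_{1 \leq i \leq k} (z_i \leq y \land z_i \neq y) \Big)
\end{align*}
for each $k \geq 1$, a conjunctive formula saying that $y$ has at least $k$ distinct strict predecessors, i.e.\ $y \geq k$ in $\mathbb N$; the minimum $y = 0$ is captured by the conjunctive formula $(\forall z)(y \leq z)$.

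The key step is to prove: for every conjunctive $\varphi(y)$ in one free variable, the satisfaction set $S_\varphi := \{n \in \mathbb N : (\mathbb N,\leq) \models \varphi(n)\}$ is one of $\emptyset$, $\{0\}$, $\mathbb N$, or $\{n : n \geq k\}$ for some $k \geq 1$. I would prove this by structural induction on $\varphi$, strengthened to handle arbitrarily many free variables: the truth of $\varphi(\vec c)$ should depend only on the $\leq$-type of $\vec c$ together with the ``$=0$'' and ``$\geq k$'' indicators up to some bound $N = N(\varphi)$. The intuition is that $(\mathbb N,\leq)$ has no constants apart from the definable minimum, and, without disjunction, there is no way to isolate an individual intermediate value. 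I expect the quantifier clauses of the induction to be the main obstacle, since one must carefully bound how large a witness can be pushed and where universal obligations start to bite.

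From the lemma one reads off that $c\text-Type_\mathbb N(a) \subseteq c\text-Type_\mathbb N(b)$ whenever $1 \leq a \leq b$, because any $\varphi \in c\text-Type_\mathbb N(a)$ has $S_\varphi \in \{\mathbb N\} \cup \{\{n \geq k\} : 1 \leq k \leq a\}$, all of which contain $b$. For the forward direction $a \leq b \Rightarrow a \lesssim b$, the case $a = 0$ is immediate: by the lemma $c\text-Type_\mathbb N(0) \cap c\text-Type_\mathbb N(b') = \{\varphi : S_\varphi = \mathbb N\}$ for all $b' \geq 1$, so $c\text-Type_\mathbb N(0) \cap c\text-Type_\mathbb N(b)$ is trivially maximal among these intersections (and strictly larger if $b = 0$). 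For $a \geq 1$ the inclusion gives $c\text-Type_\mathbb N(a) \cap c\text-Type_\mathbb N(b) = c\text-Type_\mathbb N(a)$; for any $b' \neq a$ with $b' > a$ the inclusion again yields $c\text-Type_\mathbb N(a)$, while for $b' < a$ the formula $\chi'_a$ lies in $c\text-Type_\mathbb N(a) \setminus c\text-Type_\mathbb N(b')$, making the intersection with $b'$ a proper subset. Either way maximality holds.

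For the reverse direction, suppose $a > b$ (so $a \geq 1$) and set $b' := a + 1$. The inclusion gives $c\text-Type_\mathbb N(a) \cap c\text-Type_\mathbb N(a + 1) = c\text-Type_\mathbb N(a)$, whereas $c\text-Type_\mathbb N(a) \cap c\text-Type_\mathbb N(b)$ omits $\chi'_a$ since $\chi'_a(b)$ fails for $b < a$. Hence $c\text-Type_\mathbb N(a) \cap c\text-Type_\mathbb N(b) \subsetneq c\text-Type_\mathbb N(a) \cap c\text-Type_\mathbb N(a + 1)$, so maximality fails and $a \not\lesssim b$. Combining both directions gives $(\mathbb N,\leq) \models a \lesssim b$ iff $a \leq b$, and $(\mathbb N,\leq) \models a \approx b$ iff $a = b$ follows by antisymmetry.
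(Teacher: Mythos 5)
The paper itself gives no proof of this proposition (the proof body is an empty draft todo), so there is nothing to compare against; judging your argument on its own terms: the reduction is right, but the key lemma's proof does not work as proposed. Your deduction of the proposition from the classification lemma is sound --- the upward inclusion $c\text-Type_{\mathbb N}(a)\subseteq c\text-Type_{\mathbb N}(b)$ for $1\leq a\leq b$, the special handling of $0$ via $(\forall z)(y\leq z)$, and the witness $b'=a+1$ defeating $a\lesssim b$ when $a>b$ are all correct consequences of the lemma. The gap is in the lemma itself. The induction invariant you propose (truth of $\varphi(\vec c)$ depends only on the order type of $\vec c$ plus the ``$=0$'' and ``$\geq k$'' indicators up to $N(\varphi)$) is false: the conjunctive formula $\psi(y_1,y_2):\equiv(\exists z)(y_1\leq z\land z\leq y_2\land z\neq y_1\land z\neq y_2)$ distinguishes $(N+1,N+2)$ from $(N+1,N+3)$, which agree on order type and on all your indicators. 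Truth of conjunctive formulas genuinely depends on the pairwise gaps $|c_i-c_j|$ up to a threshold, so any correct invariant must carry that data.

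Worse, even if your invariant held it would not imply the classification: for one free variable it only says that $S_\varphi$ is a union of the cells $\{0\},\{1\},\dots,\{N-1\},[N,\infty)$, which is exactly what is true of \emph{arbitrary} first-order formulas over $(\mathbb N,\leq)$ and still admits sets such as $\{1\}$ or $\{0\}\cup[2,\infty)$. Note that the latter set is not harmless: if some conjunctive $\varphi$ defined $\{0\}\cup[2,\infty)$, then $c\text-Type_{(\mathbb N,\mathbb N)}(0\lesssim 1)\subsetneq c\text-Type_{(\mathbb N,\mathbb N)}(0\lesssim 2)$ and the proposition would fail at $0\lesssim 1$. So the real content is a \emph{directional} preservation statement that exploits the absence of disjunction (hence of implications and of bounded universal quantification) --- roughly, that conjunctive formulas are preserved along order-embeddings of tuples that preserve ``being $0$'' and do not decrease any gap, plus a separate argument excluding sets like $\{0\}\cup[k,\infty)$. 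Your write-up correctly identifies the quantifier clauses as the obstacle but supplies an invariant that neither survives them nor, if it did, would deliver the conclusion.
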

\begin{proof} 
\todo[inline]{}
\end{proof}
\fi

\section{Future work}

The major line of future research is to extend the concepts and results of this paper from first-order to second-order and higher-order logic \cite<cf.>{Leivant94}.

Another major line of future work is to study logic-based similarity in concrete structures relevant in practice as for example in the setting of words as studied in computational linguistics and natural language processing. Since computing the set of all (characteristic) justifications and conjunctive types is difficult in general, it is reasonable to study fragments of our framework first by syntactically restricting the form of justifications (cf. \prettyref{sec:Generalization-basedimilarity}).

\section{Related work}

There is a vast literature on similarity, most of which deals with {\em quantitative} approaches where similarity is numerically measured in one way or another; some works dealing with the similarity of words are \citeA{Hirschberg75} and \citeA{Egho15}.\todo{refs} A notable exception is \citeA{Yao00} where qualitative judgements of similarity are interpreted through statements of the form ``$a$ is more similar to $b$ than $c$ to $d$.'' \citeA{Badra18} analyze the role of qualitative similarity in analogical transfer.

\subsection{Generalization-based similarity}\label{sec:Generalization-basedimilarity}

A notable fragment of the above notion of similarity is due to \citeA{Antic23-2}. For this, we associate with every $L$-term $s(\mathbf z)$ a {\em $g$-formula} of the form
\begin{align*} 
	\varphi_{s(\mathbf z)}(y):\equiv(\exists\mathbf z)(y=s(\mathbf z)).
\end{align*} We denote the set of all $g$-formulas over $L$ by $g\text-Fm_L$. Notice that every $g$-formula is a conjunctive formula which means that we obtain a notion of {\em $g$-similarity}\footnote{The name is derived from the observation that we can interpret every term $s(\mathbf z)$ satisfying $a=s(\mathbf o)$, for some $\mathbf o\in\mathbb A^{|\mathbf z|}$, as a {\em generalization} of $a$ in $\mathbb A$.}
\begin{align*} 
	(\mathbb{A,B})\models a\approx_g b
\end{align*} by replacing $c\text-Type$ by $g\text-Type$, which is based on $g$-formulas, in the definition of similarity. This coincides with the definition of generalization-based similarity in \citeA{Antic23-2}.

\section{Conclusion}

This paper developed a notion of qualitative logic-based similarity from the ground up using only elementary notions of first-order logic centered around the fundamental notion of type. In a broader sense, this paper is a further step towards a mathematical theory of analogical reasoning.

\bibliographystyle{theapa}
\bibliography{/Users/christianantic/Bibdesk/Bibliography,/Users/christianantic/Bibdesk/Preprints,/Users/christianantic/Bibdesk/Publications}

\begin{thebibliography}{}

\bibitem[\protect\BCAY{Anti\'c}{Anti\'c}{2022}]{Antic22}
Anti\'c, C. \BBOP2022\BBCP.
\newblock \BBOQ Analogical proportions\BBCQ\
\newblock {\Bem Annals of Mathematics and Artificial Intelligence}, {\Bem
  90\/}(6), 595--644.

\bibitem[\protect\BCAY{Anti\'c}{Anti\'c}{2023a}]{Antic22-3}
Anti\'c, C. \BBOP2023a\BBCP.
\newblock \BBOQ Analogical proportions {II}: abstract justifications\BBCQ\
\newblock \url{https://hal.archives-ouvertes.fr/hal-03879901/document}.

\bibitem[\protect\BCAY{Anti\'c}{Anti\'c}{2023b}]{Antic23-4}
Anti\'c, C. \BBOP2023b\BBCP.
\newblock \BBOQ Analogical proportions {III}: first-order logic\BBCQ\
\newblock \url{https://hal.science/hal-04005139v1/document}.

\bibitem[\protect\BCAY{Anti\'c}{Anti\'c}{2023c}]{Antic_i7}
Anti\'c, C. \BBOP2023c\BBCP.
\newblock \BBOQ Boolean proportions\BBCQ\
\newblock \url{https://arxiv.org/pdf/2109.00388.pdf}.

\bibitem[\protect\BCAY{Anti\'c}{Anti\'c}{2023d}]{Antic23-2}
Anti\'c, C. \BBOP2023d\BBCP.
\newblock \BBOQ Generalization-based similarity\BBCQ\
\newblock \url{https://arxiv.org/pdf/2302.10096.pdf}.

\bibitem[\protect\BCAY{Anti\'c}{Anti\'c}{2023e}]{Antic22-1}
Anti\'c, C. \BBOP2023e\BBCP.
\newblock \BBOQ Monolinear arithmetical and word proportions\BBCQ\
\newblock
  \url{https://www.researchgate.net/publication/365851603_Monolinear_arithmetical_and_word_proportions}.

\bibitem[\protect\BCAY{Anti\'c}{Anti\'c}{2023f}]{Antic22-4}
Anti\'c, C. \BBOP2023f\BBCP.
\newblock \BBOQ Proportional algebras\BBCQ\
\newblock \url{https://arxiv.org/pdf/2210.01751.pdf}.

\bibitem[\protect\BCAY{Badra, Sedki,\ \BBA\ Ugon}{Badra et~al.}{2018}]{Badra18}
Badra, F., Sedki, K., \BBA\ Ugon, A. \BBOP2018\BBCP.
\newblock \BBOQ On the role of similarity in analogical transfer\BBCQ\
\newblock In Cox, M.~T., Funk, P., \BBA\ Begum, S.\BEDS, {\Bem ICCBR 2018},
  LNAI 11156, \BPGS\ 499--514. Springer-Verlag.

\bibitem[\protect\BCAY{Boden}{Boden}{1998}]{Boden98}
Boden, M.~A. \BBOP1998\BBCP.
\newblock \BBOQ Creativity and artificial intelligence\BBCQ\
\newblock {\Bem Artificial Intelligence}, {\Bem 103\/}(1-2), 347--356.

\bibitem[\protect\BCAY{Egho, Ra{\"\i}ssi, Calders, Jay,\ \BBA\ Napoli}{Egho
  et~al.}{2015}]{Egho15}
Egho, E., Ra{\"\i}ssi, C., Calders, T., Jay, N., \BBA\ Napoli, A.
  \BBOP2015\BBCP.
\newblock \BBOQ On measuring similarity for sequences of itemsets\BBCQ\
\newblock {\Bem Data Mining and Knowledge Discovery}, {\Bem 29}, 732--764.

\bibitem[\protect\BCAY{Gentner}{Gentner}{1983}]{Gentner83}
Gentner, D. \BBOP1983\BBCP.
\newblock \BBOQ Structure-mapping: a theoretical framework for analogy\BBCQ\
\newblock {\Bem Cognitive Science}, {\Bem 7\/}(2), 155--170.

\bibitem[\protect\BCAY{Gust, Krumnack, K{\"u}hnberger,\ \BBA\ Schwering}{Gust
  et~al.}{2008}]{Gust08}
Gust, H., Krumnack, U., K{\"u}hnberger, K.-U., \BBA\ Schwering, A.
  \BBOP2008\BBCP.
\newblock \BBOQ Analogical reasoning: a core of cognition\BBCQ\
\newblock {\Bem K{\"u}nstliche Intelligenz}, {\Bem 22\/}(1), 8--12.

\bibitem[\protect\BCAY{Hinman}{Hinman}{2005}]{Hinman05}
Hinman, P.~G. \BBOP2005\BBCP.
\newblock {\Bem {Fundamentals of Mathematical Logic}}.
\newblock A K Peters, Wellesley, MA.

\bibitem[\protect\BCAY{Hirschberg}{Hirschberg}{1975}]{Hirschberg75}
Hirschberg, D.~S. \BBOP1975\BBCP.
\newblock \BBOQ A linear space algorithm for computing maximal common
  subsequences\BBCQ\
\newblock {\Bem Communications of the ACM}, {\Bem 18\/}(6), 341--343.

\bibitem[\protect\BCAY{Hofstadter}{Hofstadter}{2001}]{Hofstadter01}
Hofstadter, D. \BBOP2001\BBCP.
\newblock \BBOQ Analogy as the core of cognition\BBCQ\
\newblock In Gentner, D., Holyoak, K.~J., \BBA\ Kokinov, B.~K.\BEDS, {\Bem {The
  Analogical Mind: Perspectives from Cognitive Science}}, \BPGS\ 499--538. MIT
  Press/Bradford Book, Cambridge MA.

\bibitem[\protect\BCAY{Hofstadter\ \BBA\ Sander}{Hofstadter\ \BBA\
  Sander}{2013}]{Hofstadter13}
Hofstadter, D.\BBACOMMA\  \BBA\ Sander, E. \BBOP2013\BBCP.
\newblock {\Bem Surfaces and Essences. Analogy as the Fuel and Fire of
  Thinking}.
\newblock Basic Books, New York.

\bibitem[\protect\BCAY{Krieger}{Krieger}{2003}]{Krieger03}
Krieger, M.~H. \BBOP2003\BBCP.
\newblock {\Bem {Doing Mathematics: Convention, Subject, Calculation,
  Analogy}}.
\newblock World Scientific, New Jersey.

\bibitem[\protect\BCAY{Leivant}{Leivant}{1994}]{Leivant94}
Leivant, D. \BBOP1994\BBCP.
\newblock \BBOQ Higher order logic\BBCQ\
\newblock In {\Bem Handbook of Logic in Artificial Intelligence and Logic
  Programming}, \lowercase{\BVOL}~2, \BPGS\ 229--322.

\bibitem[\protect\BCAY{P{\'o}lya}{P{\'o}lya}{1954}]{Polya54}
P{\'o}lya, G. \BBOP1954\BBCP.
\newblock {\Bem {Induction and Analogy in Mathematics}}, \lowercase{\BVOL}~1 of
  {\Bem Mathematics and Plausible Reasoning}.
\newblock Princeton University Press, Princeton, New Jersey.

\bibitem[\protect\BCAY{Winston}{Winston}{1980}]{Winston80}
Winston, P.~H. \BBOP1980\BBCP.
\newblock \BBOQ Learning and reasoning by analogy\BBCQ\
\newblock {\Bem Communications of the ACM}, {\Bem 23\/}(12), 689--703.

\bibitem[\protect\BCAY{Wos}{Wos}{1993}]{Wos93}
Wos, L. \BBOP1993\BBCP.
\newblock \BBOQ The problem of reasoning by analogy\BBCQ\
\newblock {\Bem Journal of Automated Reasoning}, {\Bem 10\/}(3), 421--422.

\bibitem[\protect\BCAY{Yao}{Yao}{2000}]{Yao00}
Yao, Y. \BBOP2000\BBCP.
\newblock \BBOQ Qualitative similarity\BBCQ\
\newblock In Suzuki, Y., Ovaska, S., Furuhashi, T., Roy, R., \BBA\ Dote,
  Y.\BEDS, {\Bem Soft Computing in Industrial Applications}, \BPGS\ 339--347.
  Springer-Verlag London.

\end{thebibliography}
\if\isdraft 1

\section{Unary algebras}

\fi
\end{document}